\def\BibTeX{{\rm B\kern-.05em{\sc i\kern-.025em b}\kern-.08em
    T\kern-.1667em\lower.7ex\hbox{E}\kern-.125emX}}
\newtheorem{lemma}{Lemma}[section]
\newtheorem{theorem}{Theorem}[section]
\newtheorem{corollary}{Corollary}[section]
\newtheorem{remark}{Remark}[section]
\title{\LARGE \bf An Extended Kuramoto Model for Frequency and Phase 
Synchronization in Delay-Free Networks with Finite Number of Agents*
}
\author{Andreas Bathelt$^{1}$, Vimukthi Herath$^{2}$, and Thomas 
Dallmann$^{1,2}$
\thanks{*This work has received funding by the German Federal Ministry of 
      Education and Research (BMBF) in the course of the 6GEM research hub 
      under grant number 16KISK038 and by the German Research Foundation (Deutsche Forschungsgemeinschaft, DFG) through project ”Coordinated multipoint operation for joint communication and radar sensing - JCRS CoMP” under project number 504990291.}
\thanks{$^{1}$Andreas Bathelt and Thomas Dallmann are with the Fraunhofer 
Institute for High Frequency 
            Physics and Radar Techniques FHR, Fraunhoferstraße 20, 53343 
            Wachtberg, 
            Germany (e-mail: andreas.bathelt@fhr.fraunhofer.de, 
            thomas.dallmann@fhr.fraunhofer.de)}%
\thanks{$^{2}$Vimukthi Herath and Thomas Dallmann are with the Radio Technologies for Automated and Connected Vehicles Research Group of Technische Universität Ilmenau, Ilmenau, Germany (email: vimukthi.herath@tu-ilmenau.de, thomas.dallmann@tu-ilmenau.de)}%
}
\begin{document}

\maketitle
\thispagestyle{empty}
\pagestyle{empty}

\begin{abstract}
  Due to its description of a synchronization between oscillators, the Kuramoto 
  model is an ideal choice for a synchronisation algorithm in networked 
  systems. This requires to achieve not only a frequency 
  synchronization but also a phase synchronization -- something the standard 
  Kuramoto model can not provide for a finite number of agents. In this case, a 
  remaining phase difference is necessary to offset differences of the natural 
  frequencies. Setting the Kuramoto model into the context of dynamic consensus 
  and making use of the $n$th order discrete average consensus algorithm, this 
  paper extends the standard Kuramoto model in such a way that frequency and 
  phase synchronization are separated. This in turn leads to an algorithm 
  achieve the required frequency and phase synchronization also for a finite 
  number of agents. Simulations show the viability 
  of this extended Kuramoto model.
\end{abstract}
\begin{keywords}
  Time synchronization, Kuramoto model, Dynamic consensus, Multi agent systems 
\end{keywords}

\section{INTRODUCTION}

  The evolving field of Integrated Communications and Sensing (ICAS) promises 
  to merge mobile communications and environmental sensing based on radar 
  technology into one system. For interference-free message exchange,  a 
  synchronization of time offsets (TO) and carrier frequency offset (CFO) among 
  the devices of the wireless system is required\cite{nasir2016timing}.
  This 
  is achieved by master-slave approaches, e.g., the Schmidl \& Cox algorithm or 
  Zadoff-Chu sequences 
  \cite{schmidl1997robust,omri2019synchronization}, where the base stations 
  embed synchronization signals in the transmitted messages and 
  the user equipment performs a signal synchronization to these in-coming 
  signals. However, the 
  operation as radar sensor network places even higher demands on TO and CFO 
  synchronization as methods have to provide delay-free and highly 
  accurate timing information, i.e., a clock synchronization
  \cite{thomae2021joint,thomae2023distributed}.
  
  For such a synchronization, a multitude of approaches are available. 
  Atomic or GNSS-disciplined clocks are the most notable hardware-based 
  approaches. Network-based time synchronization protocols were also developed. 
  Beginning with Reference Broadcast Synchronization \cite{Elson.2002}, 
  Precision Time Protocol or White Rabbit 
  \cite{IEEEInstrumentationandMeasurementSociety.2008,Lipinski.2011} are 
  the standard master-slave-based approaches in this area. Another general 
  approach are multi-agent-based, i.e., reference-less, methods like 
  consensus-based time synchronization algorithms, e.g., 
  \cite{Tian.2015,Schenato.2011}. One of the earliest representative of this 
  multi-agent, reference-less idea is however 
  the Kuramoto model.

  This model has its roots in the observation of spontaneous synchronizations 
  in nature. For sets of coupled, nearly identical 
  oscillators, it can be observed that the coupling forcing them to operate in 
  unison. Examples stretch from brain rhythms to synchronous hand 
  clapping\cite{pikovsky2015dynamics}. For these phenomena, Kuramoto 
  proposed a mathematical model based on harmonic oscillators and weak coupling 
  driven by the oscillator phase differences. His main contribution lies in 
  the derivation of a steady-state solution for an infinite number of 
  oscillators, which exists for a sufficiently strong coupling factor 
  \cite{kuramoto1975self}. The Kuramoto model can therefore be used for a 
  frequency synchronization without a central coordination. It is already shown 
  that the Kuramoto model can be applied to the synchronization of pulse radars 
  and to CFO synchronization 
  \cite{dallmann2020mutual,dallmann2021sampling,dallmann2023mutual}. However, 
  the mathematical formulation of the Kuramoto model leads to a remaining 
  phase difference across the oscillators in the case of a finite number of 
  agents.
   
  With respect to the agreement among the agents, the Kuramoto model is 
  mentioned as a special case of static consensus in, e.g., 
  \cite{OlfatiSaber.2007,Moreau.2005}. Static consensus refers to algorithms, 
  which bring local and constant (static) quantities into agreement, see, e.g., 
  \cite{OlfatiSaber.2004,Blondel.2005,OlfatiSaber.2007}. 
  In addition to static consensus, there is also dynamic consensus, which 
  brings local functions of time into agreement, see, 
  e.g., \cite{Kia.2019,Spanos.2005, Zhu.2010}. The decision value is then
  represented by the algorithms' state variable(s). Similar to the 
  remaining phase difference of the  Kuramoto model, the error bound derived in 
  \cite{Kia.2019} for the (basic) dynamic consensus algorithm of 
  \cite{Spanos.2005} shows also a remaining difference across the agents' state 
  variables.
  
  Regarding its remaining phase error for a finite agent number and independent 
  of possible delays in the information exchange, the standard Kuramoto model 
  can not be applied to TO clock synchronization as this requires an agreement 
  of frequency \textit{and} phase for a finite agent number. It is thus 
  necessary to 
  extend the Kuramoto model to meet this 
  requirement. As this phase error is similar to the state 
  error of the basic dynamic consensus algorithm of \cite{Spanos.2005}, the 
  Kuramoto model can be seen as dynamic consensus. This in turn motivates 
  a combination of the Kuramoto model with algorithms of dynamic 
  consensus providing consensus without remaining state error.
  
  This paper sets therefore the Kuramoto model into the context of dynamic 
  consensus with undelayed information exchange. Using the 
  derivations of 
  \cite{Kia.2019}, an error bound for 
  the phases with respect to the consensus phase will be given. Assuming 
  nearly identical oscillators, the non-linearity of the Kuramoto model is 
  evaded by the small angle approximation. Based on the $n$th order 
  discrete average consensus (NODAC) algorithm of \cite{Zhu.2010}, an extended 
  Kuramoto model will be derived which facilitates frequency \textit{and} phase 
  agreement. This extended model provides the means for 
  TO synchronization while the bounds give a worst-case 
  estimate on the phase error of standard Kuramoto model.
  
  This paper is structured as follows. Section II reviews basics 
  of the Kuramoto model and dynamic consensus. Section III connects then the 
  Kuramoto model to dynamic consensus and Section IV presents the main result 
  -- the extended Kuramoto model. Simulation results are given in 
  Section V. Section VI briefly outlines an application of the 
  extended Kuramoto algorithm for ICAS networks. A 
  summary and an 
  outlook are 
  given in Section VII.
  
\section{PRELIMINARIES}
    This section provides a brief review of relevant definitions and 
    theoretical structures of Kuramoto model and consensus algorithms. First, 
    an overview of the notation is given.

    \begin{table*}[t]
    \setcounter{equation}{9}
    \begin{equation}\label{eqErrorBound}
      |e_i(t)| \leq \sqrt{\left(e^{-\hat{\lambda}_2(t-t_0)}\|\Pi x(t_0)\| + \frac{\sup_{t_0\leq\tau\leq t}\|\Pi\Dot{u}(t)\|}{\hat{\lambda}_2}\right)^2 
      + \left(\frac{1}{\sqrt{N}}\sum_{j=1}^Nx_j(t_0)-u_j(t_0)\right)^2}    
    \end{equation}
    \setcounter{equation}{0}
    \hrule
    \end{table*}
  
  \subsection{Notation}
    Variables like $\theta_i,\, \varphi_i  \in \mathbb{R}$ refer to the 
    respective agent marked by the index, here $i$. Variables without indices, 
    e.g., $\theta,\, \varphi \in \mathbb{R}^N$, refer to the aggregation of the 
    respective entities of the $N$ agents of the network into one vector, i.e., 
    $\theta = \begin{bmatrix}\theta_1 & \theta_2 & \cdots \theta_N 
    \end{bmatrix}^\mathrm{T}$. Finally, an over-line, e.g., $\overline{\varphi} 
    \in \mathbb{R}$,
    refers to the consensus value or agreement value of the respective entity. The unity matrix is given by $I_N \in \mathbb{R}^{N \times N}$ and $1_N = \begin{bmatrix}1 & 1 & \cdots 1 \end{bmatrix}^\mathrm{T} \in \mathbb{R^N} $ denotes the 1-vector of dimension $N$.
 
  \subsection{Kuramoto model}
    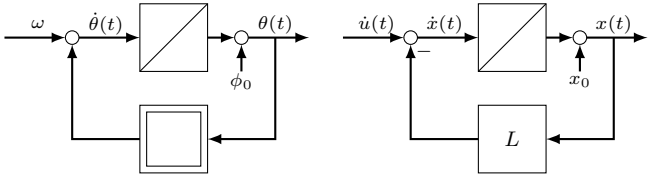
\begin{figure}[tb]
        \centering
        \begin{tikzpicture}[scale=0.9]

          
          \draw (0.5,3.2) node[text centered,text 
          width=0.5cm]{\scriptsize$\omega$};
          \draw[-latex,thick](0,3) -- (0.9,3);
 
          \draw(1,3) circle(0.1);

          \draw (1.5,3.2) node[text centered,text width=0.5cm]{\scriptsize$\Dot{\theta}(t)$};
          \draw[-latex,thick](1.1,3) -- (2,3);
        
          \draw(2,2.5) rectangle ++(1,1);
          \draw[](2,2.5) -- (3,3.5);

          \draw[-latex,thick](3,3) -- (3.4,3);
         
          \draw(3.5,3) circle(0.1);
          \draw[-latex,thick](3.5,2.5) -- (3.5,2.9);
          \draw (3.5,2.35) node[text centered,text 
          width=0.5cm]{\scriptsize$\phi_0$};

          \draw (4,3.2) node[text centered,text 
          width=0.5cm]{\scriptsize$\theta(t)$};
          \draw[-latex,thick](3.6,3) -- (4.5,3);
          \draw[-latex,thick](4,3) -- (4,1.5) -- (3,1.5);

          \draw(2,1) rectangle ++(1,1);
          \draw(2.1,1.1) rectangle ++(0.8,0.8);

          \draw[-latex,thick](2,1.5) -- (1,1.5) -- (1,2.9);


          \draw (5.5,3.2) node[text centered,text 
          width=0.5cm]{\scriptsize$\Dot{u}(t)$};
          \draw[-latex,thick](5,3) -- (5.9,3);

          \draw(6,3) circle(0.1);

          \draw (6.5,3.2) node[text centered,text 
          width=0.5cm]{\scriptsize$\Dot{x}(t)$};
          \draw[-latex,thick](6.1,3) -- (7,3);

          \draw(7,2.5) rectangle ++(1,1);
          \draw[](7,2.5) -- (8,3.5);

          \draw[-latex,thick](8,3) -- (8.4,3);

          \draw(8.5,3) circle(0.1);
          \draw[-latex,thick](8.5,2.5) -- (8.5,2.9);
          \draw (8.5,2.35) node[text centered,text 
          width=0.5cm]{\scriptsize$x_0$};

          \draw (9,3.2) node[text centered,text 
          width=0.5cm]{\scriptsize$x(t)$};https://www.overleaf.com/project/6525655574a4632ec92c974f
          \draw[-latex,thick](8.6,3) -- (9.5,3);
          \draw[-latex,thick](9,3) -- (9,1.5) -- (8,1.5);

          \draw(7,1) rectangle ++(1,1);
          \draw (7.5,1.5) node[text centered,text 
          width=0.5cm]{\footnotesize$L$};

          \draw[-latex,thick](7,1.5) -- (6,1.5) -- (6,2.9);
          \draw (6.2,2.8) node[text centered,text width=0.5cm]{\scriptsize$-$};
        \end{tikzpicture}
        \caption{Structures of Kuramoto model according to \eqref{eqKuramoto} 
        (left)  and of basic dynamic consensus algorithm according to 
        \eqref{eqBasicDynCons} \cite{Kia.2019} (right; for $\Dot{u}(t) \equiv 
        0$ 
        equivalent to static consensus)}
        \label{figKuramoto_DynConsensus}
    \end{figure}
    
    The phase $\varphi_i$ of a harmonic oscillator is given by  
    \begin{equation}\label{eqLocPhaseFct}
      \varphi_i(t) = \omega_it + \varphi_{0,i}\;,  
    \end{equation}
    where $\omega_i$ and $\varphi_{0,i}$ are the natural frequency and the 
    initial phase offset of oscillator $i$. The long term behavior of a 
    system consisting of loosely coupled oscillators with finite, nearly 
    identical cycles is discussed in \cite{Strogatz.2000}. 
    Accordingly, the rate of change of phase, $\Dot{\theta}_i(t)$, can be 
    expressed as given
    \begin{equation}\label{eqUniversal}
      \Dot{\theta}_i(t) = \omega_i + \sum_{j= 1,j \neq i}^N 
      T_{ij}\left(\theta_{\Delta,ij}(t)\right) \; ,
    \end{equation}
    where $T_{ij}(\cdot)$ is the interaction function between oscillators $i$ 
    and $j$, $\theta_{\Delta,ij}(t) = \theta_j(t)-\theta_i(t)$ is the phase 
    difference between oscillators $i$ 
    and $j$
    and $\theta_i(t_0) = \varphi_{0,i}$. For an all-to-all, equally weighed, 
    sinusoidal coupling, the interaction function $T_{ij}(\cdot)$ is 
    replaced with the $\sin$ function, and \eqref{eqUniversal} becomes
    \begin{equation}\label{eqKuramoto}
      \Dot{\theta}_i(t) = \omega_i + \frac{K}{N}\sum_{j=1,j \neq 
      i}^N\sin\left(\theta_{\Delta,ij}(t)\right)\; ,
    \end{equation}
    where $K$ is the coupling strength and again $\theta_{\Delta,ij}(t) = 
    \theta_j(t)-\theta_i(t)$. The 
    structure is shown on the left side of Fig. 
    \ref{figKuramoto_DynConsensus}. The combined 
    oscillation of individual oscillators results in a collective rhythm 
    given by the complex-valued order parameter 
    \begin{equation}\label{eqOrderParam}
        r(t)e^{i\psi(t)} = \frac{1}{N}\sum_{i= 1}^Ne^{\theta_i(t)}\;,
    \end{equation}
    through the phase coherence $r(t)$ and the average phase $\Psi(t)$. 

  \subsection{Consensus algorithms}
    \subsubsection{Network model}
      As given in \cite{OlfatiSaber.2004,Ren.2011}, the network is modelled by 
      a weighted, directed graph $G = (\mathcal{V},\mathcal{E},\mathcal{A})$, 
      where the nodes $\mathcal{V}=\{v_1, \cdots, v_k\}$ and the directed edges 
      $\mathcal{E}\subseteq \mathcal{V} \times \mathcal{V}$ represent the 
      agents and their communication connections. Whereas the orientation of an 
      edge $e_{ij} = (v_i,v_j)$ is from $v_i$ to $v_j$, the information flow 
      is in the reverse direction. The adjacency matrix $\mathcal{A}=[a_{ij}]$ 
      is induced by $\mathcal{E}$ as $e_{ij}\in \mathcal{E}\Leftrightarrow 
      a_{ij} > 0$\footnote{Time-varying weights or switching 
      topologies are not considered in this paper.}. The incidence matrix 
      $B\in\mathbb{R}^{N\times|\mathcal{E}|}$ is defined such that $b_{ij} = 1$ 
      if the edge $e_j$ ($j$ as a counting index of the edge, different from 
      $e_{ij}$) is incoming 
      to $v_i$, $b_{ij} = -1$ if $e_j$ is outgoing from $v_i$, and 
      $b_{ij} = 0$ else. Consensus within the network is reached, if every 
      other node is connected to at least one root node via a directed path 
      \cite{Blondel.2005,Ren.2011}. Average consensus
      (final decision value is mean of values of agents) is reached if each 
      agent has as many  neighbors as it is 
      neighbor to other agents (balanced graph), see \cite{OlfatiSaber.2004, 
      Kia.2019}.

    \subsubsection{Consensus protocols}
    Consensus algorithms are subdivided into static consensus, e.g., 
    \cite{OlfatiSaber.2004,OlfatiSaber.2007}, if the agreement is with respect 
    to a local constant, and dynamic consensus, e.g., 
    \cite{Kia.2019,Spanos.2005}, 
    if the agreement is with respect to a local input. Agreement is (in 
    principle) reached, if $x_i(t) = x_j(t)\; \forall i, j$ holds for all 
    states 
    variables $x_i, x_j$ of the agents $v_i$, $v_j$.  For a local agent 
    $v_i$, the basic protocol (feedback mechanism of the network) of static 
    consensus is given by
    \begin{equation}\label{eqBasicStatCons}
      \Dot{x}_i(t) = -\sum_{j= 1,j \neq i}^N a_{ij}\left(x_i(t) - x_j(t)\right)\; ,
    \end{equation}
    where $x_i(t)$ is the local state and $x_j(t)$ are the states of the 
    network's remaining agents. The network's state equation is
    \begin{equation}
      \Dot{x}(t) = - Lx(t) \; ,
    \end{equation}
    where $L=[l_{ij}]$ is the network Laplacian, defined by     
    \cite{OlfatiSaber.2004}
    \begin{equation}
       L = \Delta - \mathcal{A}\; , \quad 
       l_{ij} = \begin{cases}
       \sum_{j= 1,j \neq i}^N a_{ij} & i = j\\
       -a_{ij}                       & i \neq j
       \end{cases} \; ,
    \end{equation}
    where $\Delta$ is the out-degree matrix defined by 
    $\Delta_{ii} = \text{deg}_\text{out}(v_i)$.
    For dynamic consensus, the basic protocol of \cite{Spanos.2005} is given by
    \begin{equation}\label{eqBasicDynCons}
      \Dot{x}_i(t) = \Dot{u}_i(t) - \sum_{j= 1,j \neq i}^N a_{ij}\left(x_i(t) - 
      x_j(t)\right) \; ,
    \end{equation}
    whereas the feedback mechanism of the network is
    \begin{equation*}
      \Dot{x}(t) = \Dot{u}(t)- Lx(t) \; .
    \end{equation*}
    On the right side of Fig. \ref{figKuramoto_DynConsensus}, the respective 
    block diagram of this algorithm is shown.
     
    The explanation of average dynamic consensus of \cite{Kia.2019} also 
    includes the analysis of the error with respect to the network's agreement 
    function. This error is given by 
    \begin{equation}\label{eqErrorCons}
      e_i(t) = x_i(t) - \overline{u}(t),
    \end{equation}
    where $\overline{u}$ denotes the agreement function. The derived bound for 
    $e_i$ is shown by \eqref{eqErrorBound} (see top of the page). In 
    \eqref{eqErrorBound}, $t_0$ denotes the initial time, $\hat{\lambda}_2 = 
    \lambda_2\left(\frac{1}{2}(L+L^\mathrm{T})\right)$ describes a lower bound 
    on the convergence rate \cite{OlfatiSaber.2004}, and $\Pi = 
    I_N - \frac{1}{N}1_N1_N^\mathrm{T}$ is the orthogonal complement of the 
    agreement direction ($1_N$ for average consensus). For an
    initialization with $x_j(t_0)=u_j(t_0)$, the second term of 
    the square root vanishes.
    
    \setcounter{equation}{10}

    
    \subsubsection{NODAC ($n$th Order Discrete Average Consensus)}
      To overcome the issue of the remaining error (for a certain group of 
      input functions), Zhu and Martínez derived the NODAC algorithm in the 
      discrete-time setting; see \cite{Zhu.2010,Montijano.2014b}. For a fixed 
      network, this algorithm is given by\footnote{Note that the discrete-time 
      setting requires the weights $a_{ij}$ to fulfill certain conditions, see 
      \cite{Blondel.2005,Zhu.2010,OlfatiSaber.2007}. Since the paper focuses on 
      a continuous-time setting, these condition are not relevant in the 
      following and are hence here omitted.}
      \begin{subequations} \label{eqNODAC}
        \begin{align}
          \phantom{x_i^{[1]}(k+1)}
          \!\!\!&\begin{aligned}
            \mathllap{x_i^{[1]}(k+1)} &=  x_i^{[1]}(t)
                                 + \sum_{j = 1, \, i \neq j}^N 
                                 a_{ij}\left(x_j^{[1]}(k) - 
                                                          x_i^{[1]}(k)\right)\\
                                      &\qquad + \left(\Delta^{(n)}u_i\right)(k)
                                      \; ,
          \end{aligned}\!\!\!\label{eqDynConsensusa}\\
          \!\!\!&\begin{aligned}
             \mathllap{x_i^{[l]}(k+1)} &= x_i^{[l]}(k)
                                 + \sum_{j = 1,\, i \neq j}^N 
                                 a_{ij}\left(x_j^{[l]}(k) - 
                                                          x_i^{[l]}(k)\right)\\
                                      &\qquad + x_i^{[l-1]}(k+1) \; ,
          \end{aligned}\!\!\!
        \end{align}
      \end{subequations}
      where $x_i^{[l]},\, l \in \{1,\cdots,n\}$ denotes the state of 
      stage $l$ in agent $i$, $\Delta^{(n)}u_i$ the $n$-th order difference of 
      $u_i$, and $k \in \mathbb{N}$ the discrete time steps. For
      $m$-th order polynomials, a zero-error average consensus will be reached 
      for $n=m+1$. The idea of a stage-wise consensus on the respective 
      differences of the inputs can be carried over to the continuous-time 
      setting for the problem discussed in this paper.
      
\section{THE DYNAMIC CONSENSUS STRUCTURE OF THE KURAMOTO MODEL}
  In \cite{Moreau.2005,OlfatiSaber.2007}, the Kuramoto model \eqref{eqKuramoto} 
  is mentioned as an example for static consensus, with \cite{Moreau.2005} 
  making the restriction that all $\omega_i$ are the same. But, if they are not 
  the same, the structure of the Kuramoto model does no longer align with that 
  of a static consensus \eqref{eqBasicStatCons}. This section looks hence into 
  the consensus structure of the Kuramoto model and compares it with dynamic 
  consensus -- for an all-to-all (undirected) 
  and an arbitrary (directed) network, cf. \cite{Jadbabaie.2004}.
  
  \subsection{All-to-all network}
    Let the phase error be defined in similar to 
    \eqref{eqErrorCons} by 
    \begin{equation}\label{eqPhaseError}
      e_i(t) = \left(\theta_i(t) - \overline{\varphi}(t)\right)_{[-\pi,\pi]}\; ,
    \end{equation}
    where $\overline{\varphi}(t) = \overline{\omega}t + \overline{\varphi}$ 
    defines the consensus of the network. Also, due to the transformation 
    of the difference $\theta_j(t)-\theta_i(t)$ by the odd $\sin$ function in 
    \eqref{eqKuramoto} and the definition of the actual oscillator signal by 
    $\sin(\theta_i(t))$, it is not necessary to define the error as $e_i(t) \in 
    \mathbb{R}$. A phase error of $e_i(t) = \Tilde{e}_i(t) + 2N\pi,\, 
    N\in\mathbb{Z}$ is indistinguishable from $\Tilde{e}_i(t)$ in the final 
    sinusoidal signal. For this reason, the nominal error $\theta_i(t) - 
    \overline{\varphi}(t)$ is mapped to the interval $[-\pi,\pi]$ by 
    $(\cdot)_{[-\pi,\pi]} = \left((\cdot) + \pi\right)\mod 2\pi- \pi$. Although 
    the all-to-all network with a common weighing factor $\frac{K}{N}$ is by 
    definition balanced, the consensus phase function $\overline{\varphi}(t)$ 
    cannot be defined as the average of the local input function due to the 
    non-linearity of the $\sin$ function. However, using the order 
    parameter \eqref{eqOrderParam}, the consensus function 
    $\overline{\varphi}(t)$ can be retrieved after the transient behaviour of 
    the network since $\psi(t)$ of \eqref{eqOrderParam} behaves equal to the 
    consensus function such that 
    \begin{equation}
      \psi(t) = \overline{\varphi}(t) \; , \quad t \geq T_{tp} \; ,
    \end{equation}
    where $T$ is the duration of the transient period. Finally, the following 
    connection between the Kuramoto model and dynamic consensus can be made.
  
  \begin{lemma}\label{lemmaKuramotoDynCons}
    The Kuramoto model of \eqref{eqKuramoto} is a non-linear dynamic consensus with local phase functions $\varphi_i(t) = \omega_it + \varphi_{0,i}$ and $a_{ij}=\frac{K}{N}$. Furthermore, for $\theta_i(t_0) = \varphi_i(t_0)$ and the (final) mutual errors $\theta_j(t)-\theta_i(t)$ being small enough so that $\sin(\theta_j(t)-\theta_i(t))\approx\theta_j(t)-\theta_i(t)$, the remaining error of the agents in an all-to-all network configuration is bounded by
    \begin{equation}\label{eqBoundAlltoAll}
      \lim_{t\rightarrow\infty}|e_i(t)| \leq \frac{1}{\hat{\lambda}_2}\|\omega - 1_N\overline{\omega}\|\;,
    \end{equation}
    where $\overline{\omega} = \frac{1}{N}\sum_{i=1}^N \omega_i$.
  \end{lemma}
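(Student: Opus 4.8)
The plan is to exploit the dynamic-consensus reformulation set up in this section and then specialize the error bound \eqref{eqErrorBound} directly. First I would linearize the Kuramoto dynamics \eqref{eqKuramoto} under the stated small-angle hypothesis $\sin(\theta_j-\theta_i)\approx\theta_j-\theta_i$, which turns the coupling term into $\frac{K}{N}\sum_{j\neq i}(\theta_j(t)-\theta_i(t))$. Stacking the agents, this is exactly $-L\theta(t)$ with the all-to-all Laplacian induced by $a_{ij}=K/N$; a short computation gives $L=\frac{K}{N}(N I_N - 1_N 1_N^{\mathrm{T}})=K\Pi$, with $\Pi=I_N-\frac{1}{N}1_N 1_N^{\mathrm{T}}$ the projector defined earlier. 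The linearized model therefore reads $\dot\theta(t)=\omega-L\theta(t)$, which matches the dynamic-consensus feedback $\dot x(t)=\dot u(t)-Lx(t)$ of \eqref{eqBasicDynCons} under the identifications $x=\theta$, $\dot u=\omega$ (constant), and hence $u_i(t)=\omega_i t+\varphi_{0,i}=\varphi_i(t)$. This establishes the first assertion: the Kuramoto model is a (non-linear) dynamic consensus with local phase functions $\varphi_i$ and weights $a_{ij}=K/N$.

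With the structure identified, the bound follows by specializing \eqref{eqErrorBound}. Since the input rate is constant, $\dot u(\tau)\equiv\omega$, so $\Pi\dot u(\tau)=\Pi\omega$ for every $\tau$ and the supremum collapses to $\|\Pi\omega\|$. The initialization hypothesis $\theta_i(t_0)=\varphi_i(t_0)$ is precisely $x_j(t_0)=u_j(t_0)$, which, as noted below \eqref{eqErrorBound}, annihilates the second term under the square root. Letting $t\to\infty$, the transient term $e^{-\hat\lambda_2(t-t_0)}\|\Pi x(t_0)\|$ vanishes because $\hat\lambda_2>0$ on the balanced, connected all-to-all graph (indeed $\hat\lambda_2=\lambda_2(K\Pi)=K$), leaving only $\|\Pi\omega\|/\hat\lambda_2$. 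Finally I would evaluate the projection, $\Pi\omega=\omega-\frac{1}{N}1_N(1_N^{\mathrm{T}}\omega)=\omega-1_N\overline\omega$, which yields the claimed bound $\lim_{t\to\infty}|e_i(t)|\le\frac{1}{\hat\lambda_2}\|\omega-1_N\overline\omega\|$.

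The main obstacle is conceptual rather than computational: justifying that the small-angle approximation legitimately reduces the genuinely non-linear Kuramoto model to the linear dynamic consensus for which \eqref{eqErrorBound} was derived. I would argue that the bound should be read self-consistently as a worst-case estimate valid in the regime where the phase differences stay small, so that the linearization error is higher order and the wrapping operator $(\cdot)_{[-\pi,\pi]}$ of \eqref{eqPhaseError} acts as the identity on the (small) limiting errors. I would also verify that the agreement function inherited from \cite{Kia.2019}, namely $\overline u(t)=\frac{1}{N}\sum_j\varphi_j(t)=\overline\omega t+\overline\varphi$, coincides with the consensus phase $\overline\varphi(t)$ recovered from the order parameter, so that the errors in \eqref{eqErrorCons} and \eqref{eqPhaseError} denote the same quantity. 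Once this identification is secured, the remaining projection and limit computations above are routine.
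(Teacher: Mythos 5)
Your proposal is correct and follows essentially the same route as the paper: identify the Kuramoto model with the basic dynamic consensus protocol \eqref{eqBasicDynCons} via $u_i(t)=\varphi_i(t)$ and $\dot u_i=\omega_i$, invoke the small-angle approximation to justify applying the linear error bound \eqref{eqErrorBound} on the balanced all-to-all graph, and then evaluate $\sup_\tau\|\Pi\dot\varphi(\tau)\|=\|\Pi\omega\|=\|\omega-1_N\overline\omega\|$ while the transient term and the initialization term vanish. Your explicit computations ($L=K\Pi$, $\hat\lambda_2=K$) and the caveat about the wrapping operator are welcome elaborations but do not change the argument.
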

  
  \begin{proof}
    Setting $u_i(t) = \varphi_i(t)$ in \eqref{eqBasicDynCons} and taking into 
    consideration that $\frac{d}{dt}\varphi_i(t) = \omega_i$, the Kuramoto 
    model \eqref{eqKuramoto} describes the same dynamic consensus as 
    \eqref{eqBasicDynCons} -- except for the non-linearity within the sum of 
    the state (phase) differences due to the $\sin$ 
    function. By virtue of the small-angle approximation for small (final) 
    mutual errors, the linear protocol of \eqref{eqBasicDynCons} approximates 
    \eqref{eqKuramoto} sufficiently well so that the bound of the remaining 
    error \eqref{eqBoundAlltoAll} follows from the balanced graph of the 
    all-to-all case with the consensus direction (left eigenvalue of $L$ for 
    $\lambda = 0$) of $1_N$. Thus, \eqref{eqErrorBound} holds and 
    $\sup_{t_0\leq\tau\leq t}\|\Pi\Dot{\varphi}(\tau)\|$ in the first term of 
    \eqref{eqErrorBound} becomes
    \begin{equation*}
      \sup_{t_0\leq\tau\leq t}\|\Pi\Dot{\varphi}(\tau)\| = \|\Pi\omega\|  = \|\omega - 1_N\overline{\omega}\|\; ,
    \end{equation*}
    while the exponential becomes zero for $t\rightarrow\infty$ and the second 
    term vanishes based on $\theta_i(t_0) = \varphi_i(t_0)$.
  \end{proof}
  \begin{remark} \label{remarkConsDirection}
    In order to include the bound on the transient phase, i.e., 
    $e^{-\hat{\lambda}_2(t-t_0)}\|\Pi \theta(t_0)\|$, it would be necessary to 
    introduce an adjustment factor reflecting a bound on the worst-case 
    convergence due to the non-linear protocol \eqref{eqKuramoto}. The term 
    $e^{-\hat{\lambda}_2(t-t_0)}$ only reflects the network structure, but not 
    the non-linear behaviour of the differential equation.    
  \end{remark}
  \begin{remark}\label{remarkGammas}
    As shown by the decomposition of the consensus error in \cite{Kia.2019}, 
    the consensus frequency $\overline{\omega}$ and the (normed) left 
    eigenvalue $\gamma_L$ of $L$ for $\lambda = 0$ are connected through 
    \eqref{eqErrorCons} since, for $\gamma_L = \frac{1}{\sqrt{N}}1_N$, it 
    follows that
    \begin{align*}
      0 = \gamma_L^\mathrm{T}\Dot{e}(t) 
        &= \gamma_L^\mathrm{T}\left(-Lx(t) + \Dot{u}(t) - 1_N\Dot{\overline{u}}(t)\right)
    \end{align*}
    and hence $\Dot{\overline{u}}(t) = \frac{\gamma_L^\mathrm{T}\Dot{u}(t)}{\sum_{i=1}^N(\gamma_L)_i} = \frac{1}{N}1_N^\mathrm{T}\Dot{u}(t)$. For the Kuramoto model, this equation becomes (cf. \cite{Jadbabaie.2004})
    \begin{equation} \label{eqKuramotoConsDirection}
       0 = -\gamma^\mathrm{T}\left(\frac{K}{N}B\sin(B^\mathrm{T}\theta) + \omega - 1_N\Dot{\overline{\varphi}}\right) \; ,
    \end{equation}
    where the incidence matrix $B$ defines the Laplacian via $L = 
    BWB^\mathrm{T}$, where $W$ is the matrix of the edge weights -- for the 
    Kuramoto model, $W = \frac{K}{N}I$. Also, $\gamma$ marks the exact value as 
    induced by the network structure \textit{and} the non-linear protocol; 
    $\gamma_L$ follows through $L$ only from the network structure (of the 
    linear protocol). This shows that the consensus frequency converges to the 
    average of the individual frequencies only if 
    $-\gamma_L^\mathrm{T}(\frac{K}{N}B\sin(B^\mathrm{T}\theta) \approx 0$, 
    which is equivalent to 
    $\sin(\theta_j(t)-\theta_i(t))\approx\theta_j(t)-\theta_i(t)$.  In fact, 
    $\gamma$ is time-dependent due to the time-dependency induced by 
    $B\sin(B^\mathrm{T}\theta(t))$ (opposed to the time-invariance of $Lx(t)$), 
    i.e., $\gamma(t)$ is 
    defined such that
    \begin{equation}
       0 = \gamma^\mathrm{T}(t)\frac{K}{N}B\sin(B^\mathrm{T}\theta(t)) \; .
    \end{equation}
    For this reason, only the final mutual errors $\theta_j(t)-\theta_i(t)$ are 
    considered in the lemma above.
  \end{remark}
  
  \subsection{Arbitrary network}
    As analyzed in \cite{Jadbabaie.2004}, the Kuramoto model also holds for an 
    arbitrary (directed) network structure. For an arbitrary (non-balanced), 
    directed network, the consensus direction is no longer $1_N$, i.e., 
    $\gamma_L \neq \frac{1}{\sqrt{N}}1_N$, see \cite{OlfatiSaber.2004}. But, 
    similar to \eqref{eqKuramotoConsDirection}, 
    the left eigenvector/consensus direction still needs to fulfill
    \begin{equation}
      0 = -\gamma^\mathrm{T}(t)\left(\frac{K}{N}\Tilde{B}\sin(B^\mathrm{T}\theta(t)) + \omega - 1_N\Dot{\overline{\varphi}}\right)\; ,
    \end{equation}
    where $\Tilde{B}$ is defined such that $L = \Tilde{B}B^\mathrm{T} = \Delta 
    - \mathcal{A}$. The definition $L = BB^\mathrm{T}$ holds only if the graph 
    is undirected (e.g., an all-to-all network) and hence $L$ is symmetric. 
    That 
    is, $\Tilde{B} \in \mathbb{R}^{N\times|\mathcal{E}|}$ is defined as 
    $\Tilde{b}_{ij} = 1$ if the edge is incoming, $\Tilde{b}_{ij} = 0$ else. 
    Similar to Remark \ref{remarkGammas}, the network-based left 
    eigenvector $\gamma_L$, i.e., $\gamma_L^\mathrm{T}L = 0$, holds only in the 
    case of the small angle approximation. That is,  
    \begin{equation*}
      0 \approx 
      -\gamma_L^\mathrm{T}\frac{K}{N}\Tilde{B}\sin(B^\mathrm{T}\theta(t)) \;, 
      \quad t \geq T_{tp}
    \end{equation*}    
    holds true for the phase $\theta(t)$. Thus, 
    for an arbitrary, connected network the result as given below follows. 
       
    \begin{corollary}
      For an arbitrary, connected network structure, the Kuramoto model is a 
      non-linear dynamic consensus with local phase functions $\varphi_i(t) = 
      \omega_it + \varphi_{0,i}$ and $a_{ij}=\frac{K}{N}$. Furthermore, for 
      $\theta_i(t_0) = \varphi_i(t_0)$ and the (final) mutual errors 
      $\theta_j(t)-\theta_i(t)$ being small enough so that 
      $\sin(\theta_j(t)-\theta_i(t))\approx\theta_j(t)-\theta_i(t)$, the 
      remaining error of the agents is bounded by
      \begin{equation}\label{eqBoundArbitrary}
        \lim_{t\rightarrow\infty}|e_i(t)| \leq
        \frac{1}{\lambda_2}\|(I - \gamma_L\gamma_L^\mathrm{T})(\omega-1_N\overline{\omega})\| \; .
      \end{equation}
    \end{corollary}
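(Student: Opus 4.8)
The plan is to mirror the proof of Lemma~\ref{lemmaKuramotoDynCons} and to re-use the generic consensus error bound \eqref{eqErrorBound}, replacing the balanced-case quantities by their directed-network counterparts. As in the lemma, I would set $u_i(t)=\varphi_i(t)$ so that $\Dot{u}(t)=\omega$, linearize the sinusoidal coupling via the small-angle approximation $\sin(\theta_j-\theta_i)\approx\theta_j-\theta_i$, and thereby identify the Kuramoto dynamics with the linear dynamic-consensus protocol \eqref{eqBasicDynCons} whose state equation is $\Dot{x}=\Dot{u}-Lx$. The sole structural change relative to the all-to-all case is that the Laplacian $L=\Tilde{B}B^\mathrm{T}$ is no longer symmetric, so the agreement direction is the normalized left null vector $\gamma_L$ (with $\gamma_L^\mathrm{T}L=0$) instead of $\tfrac{1}{\sqrt{N}}1_N$; accordingly the orthogonal projector $\Pi=I_N-\tfrac{1}{N}1_N1_N^\mathrm{T}$ is replaced by $I-\gamma_L\gamma_L^\mathrm{T}$ and the symmetric-part eigenvalue $\hat{\lambda}_2$ by $\lambda_2$.

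Concretely, writing the error as $e(t)=\theta(t)-1_N\overline{\varphi}(t)$ and using $L1_N=0$, the linearized error dynamics read $\Dot{e}=-Le+\omega-1_N\Dot{\overline{\varphi}}$. Requiring the error to remain in the non-consensus subspace $\gamma_L^\perp$ forces $\gamma_L^\mathrm{T}\Dot{e}=0$, which together with $\gamma_L^\mathrm{T}L=0$ pins the consensus rate $\Dot{\overline{\varphi}}$ through $\gamma_L^\mathrm{T}\omega$. The persistent error is then driven by the projection $(I-\gamma_L\gamma_L^\mathrm{T})(\omega-1_N\overline{\omega})$ of the frequency deviation onto $\gamma_L^\perp$, and substituting this into \eqref{eqErrorBound} -- with $\Pi$ replaced by $I-\gamma_L\gamma_L^\mathrm{T}$ and $\hat{\lambda}_2$ by $\lambda_2$ -- the supremum term contributes $\tfrac{1}{\lambda_2}\|(I-\gamma_L\gamma_L^\mathrm{T})(\omega-1_N\overline{\omega})\|$. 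The transient term $e^{-\hat{\lambda}_2(t-t_0)}\|\Pi\theta(t_0)\|$ vanishes as $t\to\infty$ and the second summand under the root vanishes by the initialization $\theta_i(t_0)=\varphi_i(t_0)$, which yields \eqref{eqBoundArbitrary}. As a consistency check, when $\gamma_L=\tfrac{1}{\sqrt{N}}1_N$ one has $(I-\gamma_L\gamma_L^\mathrm{T})1_N=0$ and $I-\gamma_L\gamma_L^\mathrm{T}=\Pi$, so the bound collapses to \eqref{eqBoundAlltoAll}.

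The main obstacle is twofold. First, the bound \eqref{eqErrorBound} is derived in \cite{Kia.2019} for a symmetric (balanced) Laplacian, where the orthogonal splitting along $1_N$ renders the modal decomposition transparent; transferring it to a non-symmetric $L$ requires verifying that the non-consensus modes still decay at rate $\lambda_2$ and that the correct projector is the \emph{orthogonal} $I-\gamma_L\gamma_L^\mathrm{T}$ rather than the oblique projector $I-\tfrac{1_N\gamma_L^\mathrm{T}}{\gamma_L^\mathrm{T}1_N}$ that separates the $1_N$-direction from $\gamma_L^\perp$. Second, in a non-balanced network the consensus rate fixed by $\gamma_L^\mathrm{T}\omega/(\gamma_L^\mathrm{T}1_N)$ is a $\gamma_L$-weighted average rather than the plain mean $\overline{\omega}$ appearing in \eqref{eqBoundArbitrary}, so the identification of $\Dot{\overline{\varphi}}$ with $\overline{\omega}$ must be justified or absorbed into the looseness of the bound. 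Both points hinge, as in Lemma~\ref{lemmaKuramotoDynCons}, on the \emph{final} mutual errors being small enough that the small-angle approximation and the relation $0\approx-\gamma_L^\mathrm{T}\tfrac{K}{N}\Tilde{B}\sin(B^\mathrm{T}\theta)$ hold, so that the linear protocol faithfully approximates \eqref{eqKuramoto}.
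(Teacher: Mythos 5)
Your proposal follows essentially the same route as the paper: linearize via the small-angle approximation, identify the Kuramoto dynamics with the protocol \eqref{eqBasicDynCons}, and redo the error decomposition of \cite{Kia.2019} with the agreement direction $\gamma_L$ in place of $\tfrac{1}{\sqrt{N}}1_N$, replacing $\Pi$ by $I-\gamma_L\gamma_L^\mathrm{T}$ and $\hat{\lambda}_2$ by $\lambda_2$. Of the two obstacles you flag, the paper settles the first by constructing an orthogonal transformation $T=\begin{bmatrix}\gamma_L & R\end{bmatrix}$ with $TT^\mathrm{T}=I$, so that $RR^\mathrm{T}=I-\gamma_L\gamma_L^\mathrm{T}$ is indeed the orthogonal projector onto the disagreement subspace and $\Dot{\Tilde{e}}_{dis}=-R^\mathrm{T}LR\Tilde{e}_{dis}+R^\mathrm{T}(\Dot{\varphi}-1_N\Dot{\overline{\varphi}})$; the second point --- that $\Dot{\overline{\varphi}}$ is the $\gamma_L$-weighted average $\gamma_L^\mathrm{T}\omega/\sum_i(\gamma_L)_i$ rather than the plain mean $\overline{\omega}$ appearing in \eqref{eqBoundArbitrary} --- is a genuine looseness that the paper's proof also leaves unaddressed.
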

  
    \begin{proof}
       The first part follows directly from Lemma \ref{lemmaKuramotoDynCons}. 
       Adjusted for $\gamma_L \neq \frac{1}{\sqrt{N}}1_N$, the error bound is 
       given by the derivation of \cite{Kia.2019}. The error $e(t)$ is 
       decomposed into its agreement and disagreement directions as
       \begin{equation*}
         T^\mathrm{T}e(t) =  
           \begin{bmatrix} {\Tilde{e}}_{agr}(t)\\ {\Tilde{e}}_{dis}(t) 
           \end{bmatrix}\, , \quad {\Tilde{e}}_{agr}(t) \in \mathbb{R}\, ,  
           \; {\Tilde{e}}_{dis}(t) \in \mathbb{R}^{N-1}
        \end{equation*}
        where $T = \begin{bmatrix} \gamma_L & R \end{bmatrix} \in \mathbb{R}^{N 
        \times N}$ with $R \in \mathbb{R}^{N \times (N-1)}$ and 
        $\|\gamma_L\|=1$ such that $TT^\mathrm{T}=T^\mathrm{T}T=I$.
        Assuming that the small angle approximation holds, the derivative of 
        the disagreement direction follows with 
        \begin{equation*}
        \Dot{\Tilde{e}}_{dis}(t) = -R^\mathrm{T}LR\Tilde{e}_{dis} 
        + R^\mathrm{T}(\Dot{\varphi}(t) - 1_N\Dot{\overline{\varphi}}(t))  \; .
        \end{equation*}
        Following the argument leading to the error bound of average 
        consensus presented in \cite{Kia.2019}, the orthogonal projection onto 
        the complement of the consensus direction now becomes $\Pi = I - 
        \gamma_L\gamma_L^\mathrm{T}$ and \eqref{eqBoundArbitrary} follows for 
        an arbitrarily connected Kuramoto model. Also, $\hat{\lambda}_2$ is 
        replaced by $\lambda_2 = \lambda_2(L)$ since $\hat{\lambda}_2$ can only 
        be used for the convergence speed of the disagreement direction in the 
        case of balanced networks but not in the case of arbitrary networks, 
        see \cite[Sec. VIII]{OlfatiSaber.2004}.
    \end{proof}

    \begin{remark}
      The bound \eqref{eqBoundArbitrary} can also be used to define a bound 
      based on \eqref{eqKuramotoConsDirection}. Assuming a 
      linearized version of $\eqref{eqKuramotoConsDirection}$, the same 
      derivation as done for \eqref{eqBoundArbitrary} leads to
      \begin{equation}
        \lim_{t\rightarrow\infty}|e_i(t)| \leq
        \frac{1}{\lambda_2}\|(I - \gamma\gamma^\mathrm{T})(\omega-1_N\overline{\omega})\|\;,          
      \end{equation}
      where $\gamma$ is that of \eqref{eqKuramotoConsDirection}.
    \end{remark}
  
\section{EXTENDED KURAMOTO MODEL}
     With the established connection between the Kuramoto model and dynamic 
     consensus, the NODAC algorithm is used to derive main result of the paper 
     -- a two-staged, extended version of the Kuramoto model yielding a 
     zero-phase error for a limited number of agents. In general, each of the 
     $n$ stages of the NODAC algorithm \eqref{eqNODAC} perform an individual 
     consensus on the $l$-th difference, $l = 0,\cdots, n-1$, of the input 
     function. Transferring this idea to continuous-time, it becomes the $l$-th 
     derivative. For the first order polynomial of the input function 
     $\varphi(t)$, the order of the NODAC becomes $n=2$ so that the extended 
     Kuramoto model resulting from fusing the continuous-time variant of a 2nd 
     order NODAC with the Kuramoto model is given by  
  \begin{subequations} \label{eqExtendedKuramoto}
    \begin{align}
  \Dot{\vartheta}_i(t) &= -\sum_{j= 1,j \neq i}^k 
  a_{ij}^\vartheta\left(\vartheta_{\Delta,ij}(t)\right)
  + \Ddot{\varphi}_i(t) \, , \! 
  \label{eqSynchConsensusFreq}\\
  \Dot{\theta}_i(t) &= \frac{K}{N}\sum_{j= 1,j \neq i}^k 
  a_{ij}^\theta\sin(\theta_{\Delta,ij}(t)) + \vartheta_i(t) 
  \, , \label{eqSynchConsensusAbs}
  \end{align}
  \end{subequations}
  where $\vartheta_{\Delta,ij}(t) = \vartheta_i(t) - \vartheta_j(t)$ and 
  $a_{ij}^\vartheta \in \mathbb{R}_0,\;a_{ij}^\theta \in \{0,1\}$ are 
  the weights of the frequency consensus and phase consensus represented by the 
  state variables $\vartheta_i(t)$ and $\theta_i(t)$. The trigonometric 
  function is only required in the second stage (phase consensus) since the 
  modulo $2\pi$ equivalence is only required for the phase.

    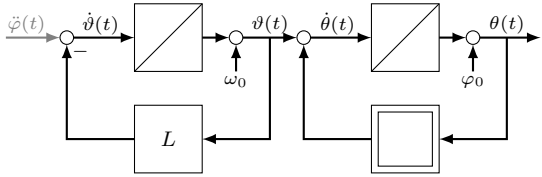
\begin{figure}[tb]
        \centering
        \begin{tikzpicture}[scale=0.9]


          \draw (0.4,3.2) node[text centered,text width=0.5cm,gray]{\scriptsize$\Ddot{\varphi}(t)$};
          \draw[-latex,thick,gray](0.1,3) -- (0.9,3);
 
          \draw(1,3) circle(0.1);

          \draw (1.5,3.2) node[text centered,text width=0.5cm]{\scriptsize$\Dot{\vartheta}(t)$};
          \draw[-latex,thick](1.1,3) -- (2,3);
        
          \draw(2,2.5) rectangle ++(1,1);
          \draw[](2,2.5) -- (3,3.5);

          \draw[-latex,thick](3,3) -- (3.4,3);

          \draw(3.5,3) circle(0.1);
          \draw[-latex,thick](3.5,2.5) -- (3.5,2.9);
          \draw (3.5,2.35) node[text centered,text width=0.5cm]{\scriptsize$\omega_0$};

          \draw (4,3.2) node[text centered,text width=0.5cm]{\scriptsize$\vartheta(t)$};
          \draw[-latex,thick](3.6,3) -- (4.4,3);
          \draw[-latex,thick](4,3) -- (4,1.5) -- (3,1.5);

          \draw(2,1) rectangle ++(1,1);
          \draw (2.5,1.5) node[text centered,text width=0.5cm]{\footnotesize$L$};

          \draw[-latex,thick](2,1.5) -- (1,1.5) -- (1,2.9);
          \draw (1.2,2.8) node[text centered,text width=0.5cm]{\scriptsize$-$};

 
          \draw(4.5,3) circle(0.1);

          \draw (5,3.2) node[text centered,text width=0.5cm]{\scriptsize$\Dot{\theta}(t)$};
          \draw[-latex,thick](4.6,3) -- (5.5,3);
        
          \draw(5.5,2.5) rectangle ++(1,1);
          \draw[](5.5,2.5) -- (6.5,3.5);

          \draw[-latex,thick](6.5,3) -- (6.9,3);

          \draw(7,3) circle(0.1);
          \draw[-latex,thick](7,2.5) -- (7,2.9);
          \draw (7,2.35) node[text centered,text width=0.5cm]{\scriptsize$\varphi_0$};

           \draw (7.5,3.2) node[text centered,text width=0.5cm]{\scriptsize$\theta(t)$};
          \draw[-latex,thick](7.1,3) -- (8,3);
 
          \draw[-latex,thick](7.5,3) -- (7.5,1.5) -- (6.5,1.5);

          \draw(5.5,1) rectangle ++(1,1);
          \draw(5.6,1.1) rectangle ++(0.8,0.8);

          \draw[-latex,thick](5.5,1.5) -- (4.5,1.5) -- (4.5,2.9);

       
        \end{tikzpicture}
        \caption{Structure of extended Kuramoto model according to \eqref{eqExtendedKuramoto}}
        \label{figExtendedKuramoto}
    \end{figure}
    The structure of this consensus is shown in Fig. \ref{figExtendedKuramoto}. 
    Given the local phase function \eqref{eqLocPhaseFct}, the network is 
    initialized with
    \begin{equation}\label{eqInitExtendedKuramoto}
       \vartheta_i(t_0) = \omega_i, \qquad 
       \theta_i(t_0)    = \varphi_i(t_0)\;.
    \end{equation}
    Separated from the second stage \eqref{eqSynchConsensusAbs}, the frequency 
    consensus of the first stage \eqref{eqSynchConsensusFreq} creates a common 
    reference frequency for the system. The second stage is then solely 
    responsible for bringing the phases into agreement. The 
    instantaneous frequency $\Dot{\theta}_i(t)$ brings only the phase 
    $\theta_i(t)$ into agreement since the additive term $\vartheta_i(t)$ 
    eventually converges to a common value. That is, $\Dot{\theta}_i(t)$ is 
    essentially only influenced by the sum term of \eqref{eqSynchConsensusAbs}. 
    The differences between the $\vartheta_i(t)$ of the transient period 
    create only temporary errors in the phase $\theta_i(t)$, which are then 
    compensated. The input $\Ddot{\varphi}(t)$ represents possible disturbances 
    in the local phase function \eqref{eqLocPhaseFct}, e.g., phase noise. Such 
    temporary or zero-mean disturbances are however compensated through an 
    adjustment of the frequency (and phase). Thus, the structure 
    of \eqref{eqExtendedKuramoto} recreates the situation of an equal frequency 
    for \eqref{eqSynchConsensusAbs} as assumed in \cite{Moreau.2005}. This 
    leads to the following result.

     \begin{theorem}
       Assuming an arbitrary, connected network and a local phase function 
       given by $\varphi_i(t) = \omega_it + \varphi_0$, then all local states 
       $\theta_i$ of the extended Kuramoto model \eqref{eqExtendedKuramoto} 
       converge to a common consensus function $\overline{\varphi}$ with no 
       remaining error. With an initialization of the network as in 
       \eqref{eqInitExtendedKuramoto}, the slope of $\overline{\varphi}$, i.e., 
       the consensus frequency, is given by 
       \begin{equation}\label{eqConsFreExtended}
          \lim_{t\rightarrow\infty}\vartheta_i(t) = \vartheta_\infty  = \frac{\gamma_L^\mathrm{T}\omega}{\sum_{i=1}^N(\gamma_L)_i} \; .
       \end{equation} 
     \end{theorem}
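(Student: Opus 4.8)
The plan is to exploit the \emph{cascade} structure of \eqref{eqExtendedKuramoto}: the frequency stage \eqref{eqSynchConsensusFreq} is autonomous (it does not depend on $\theta$), while the phase stage \eqref{eqSynchConsensusAbs} is merely driven by the output $\vartheta(t)$ of the first stage. I would therefore treat the two stages in turn and glue them together with a vanishing-input argument.

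First I would dispatch the frequency stage. Since $\varphi_i(t)=\omega_i t+\varphi_0$ is affine in $t$, its second derivative vanishes, $\ddot{\varphi}_i\equiv 0$, so \eqref{eqSynchConsensusFreq} collapses to the pure linear consensus protocol $\dot{\vartheta}(t)=-L\vartheta(t)$, where $L$ is the Laplacian induced by the weights $a_{ij}^\vartheta$. For a connected (arbitrary, directed) graph this reaches exact consensus, and the limit is exactly the left-eigenvector weighting of the initial data already used in Remark~\ref{remarkGammas}; with the initialization $\vartheta(t_0)=\omega$ from \eqref{eqInitExtendedKuramoto} this yields $\vartheta(t)\to\vartheta_\infty 1_N$ with $\vartheta_\infty=\gamma_L^{\mathrm T}\omega/\sum_{i=1}^N(\gamma_L)_i$, which is precisely \eqref{eqConsFreExtended}. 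The convergence is exponential, so $\vartheta(t)-\vartheta_\infty 1_N$ decays like $e^{-\lambda_2(t-t_0)}$; the crucial point is that the input to this stage is the \emph{zeroth}-order polynomial $\ddot{\varphi}\equiv 0$, so the frequency stage settles with \emph{no} remaining error.

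Next I would linearize the phase stage by the small-angle approximation $\sin(\theta_{\Delta,ij})\approx\theta_{\Delta,ij}$, consistent with Lemma~\ref{lemmaKuramotoDynCons} and the Corollary, turning \eqref{eqSynchConsensusAbs} into $\dot{\theta}(t)=-L\theta(t)+\vartheta(t)$, where now $L$ is the Laplacian induced by the phase weights $\tfrac{K}{N}a_{ij}^\theta$. This is exactly the dynamic-consensus form \eqref{eqBasicDynCons} with effective input derivative $\dot{u}=\vartheta$. I would then borrow the decomposition from the proof of the Corollary: writing $T=\begin{bmatrix}\gamma_L & R\end{bmatrix}$ orthogonal, the disagreement coordinate $\tilde{e}_{dis}=R^{\mathrm T}e$ obeys $\dot{\tilde{e}}_{dis}=-R^{\mathrm T}LR\,\tilde{e}_{dis}+R^{\mathrm T}\bigl(\vartheta(t)-1_N\dot{\overline{\varphi}}(t)\bigr)$, where $\dot{\overline{\varphi}}(t)\to\vartheta_\infty$.

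The conclusion then follows from a vanishing-input argument, and this gluing of the two stages is the main obstacle. The matrix $-R^{\mathrm T}LR$ is Hurwitz, with the same spectral gap $\lambda_2=\lambda_2(L)$ governing the decay as in the Corollary, so the disagreement subsystem is exponentially stable. By the first step the driving term $R^{\mathrm T}(\vartheta-1_N\vartheta_\infty)$ tends to zero exponentially, because $\vartheta\to\vartheta_\infty 1_N$ lands exactly on the agreement direction and is cancelled by $1_N\dot{\overline{\varphi}}$. An exponentially stable linear system driven by an input tending to zero has a state tending to zero, hence $\tilde{e}_{dis}\to 0$; equivalently every phase difference $\theta_i-\theta_j\to 0$ and the phase error \eqref{eqPhaseError} vanishes, i.e. exact phase synchronization. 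The surviving agreement component of $\theta$ drifts at rate $\vartheta_\infty$, so $\theta_i(t)\to\overline{\varphi}(t)=\vartheta_\infty t+\mathrm{const}$, confirming the consensus slope $\vartheta_\infty$. I would stress the contrast with the Corollary: there the persistent non-consensus input $\omega$ left the nonzero residual $\tfrac{1}{\lambda_2}\|(I-\gamma_L\gamma_L^{\mathrm T})(\omega-1_N\overline{\omega})\|$ of \eqref{eqBoundArbitrary}, whereas here the first stage replaces that input by $\vartheta(t)$, which settles onto the agreement direction and drives the disagreement-projected input to zero, which is exactly the mechanism removing the remaining error. The delicate points to make rigorous are that the transient $\vartheta(t)\neq\vartheta_\infty 1_N$ produces only a decaying excitation of the phase disagreement (an input-to-state-stability estimate for the cascade), and that, as already flagged in the Corollary, for arbitrary networks $\gamma_L\neq\tfrac{1}{\sqrt{N}}1_N$ and $\hat{\lambda}_2$ must be replaced by $\lambda_2(L)$.
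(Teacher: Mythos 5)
Your proposal is correct and follows essentially the same route as the paper: the frequency stage reduces to static consensus converging to $\vartheta_\infty=\gamma_L^{\mathrm T}\omega/\sum_{i=1}^N(\gamma_L)_i$, after which the phase stage is a (linearized) dynamic consensus whose disagreement-projected input $\Pi(1_N\vartheta_\infty-1_N\overline{\vartheta})$ vanishes, so the residual bound \eqref{eqBoundArbitrary} collapses to zero. The only difference is that you make the cascade gluing explicit via a vanishing-input/ISS estimate on $\dot{\tilde{e}}_{dis}$, whereas the paper simply substitutes the limit $1_N\vartheta_\infty$ into the already-derived bounds and adds a steady-state argument ($\dot{e}=\frac{K}{N}\Tilde{B}\sin(B^{\mathrm T}\theta)=0\Rightarrow\theta=c1_N$); your version is, if anything, slightly more careful about the transient where $\vartheta(t)\neq\vartheta_\infty 1_N$.
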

     \begin{proof}
       For an input function as given by \eqref{eqLocPhaseFct}, the second 
       derivative $\Ddot{\varphi}_i(t)$ is zero. Thus, 
       \eqref{eqSynchConsensusFreq} reduces to a static consensus whose 
       convergence is assured by the network being connected, cf. 
       \cite{OlfatiSaber.2004,OlfatiSaber.2007}. The consensus frequency in 
       \eqref{eqConsFreExtended} follows from Corollary 2 of 
       \cite{OlfatiSaber.2004}. Thus, \eqref{eqSynchConsensusAbs} becomes 
       essentially a Kuramoto model with all $\vartheta_i(t)$ converging to 
       $\vartheta_\infty$. Then, upon replacing
       $\omega$ with $1_N\vartheta_\infty$ and $\overline{\omega}$ with 
       $\frac{\gamma_L^\mathrm{T}1_N\vartheta_\infty}{\sum_{i=1}^N(\gamma_L)_i} 
       = \vartheta_\infty$, the error bounds 
       \eqref{eqBoundAlltoAll} and \eqref{eqBoundArbitrary} for the all-to-all 
       and the arbitrary network prove convergence of 
       the error \eqref{eqPhaseError} to zero and hence convergence of 
       \eqref{eqSynchConsensusAbs}.
        Furthermore, the error $\Dot{e}(t) =  
       \frac{K}{N}\Tilde{B}\sin(B^\mathrm{T}\theta(t)) + 1_N\vartheta_\infty - 
       1_N\overline{\vartheta}$ becomes $\Dot{e}(t) =  
       \frac{K}{N}\Tilde{B}\sin(B^\mathrm{T}\theta(t))$ since 
       $\overline{\vartheta} = 
       \frac{\gamma^\mathrm{T}1_N}{\sum_{i=1}^N(\gamma)_i} \vartheta_\infty = 
       \vartheta_\infty$ for any $\gamma$ (of \eqref{eqSynchConsensusAbs}). In 
       the steady state $\Dot{e}(t) = 0$, this means $\theta(t) = c1_N$.
     \end{proof}
 
\section{SIMULATIONS}
  This section provides two examples. The first example focuses on the bound 
  of the remaining error and the second example is a proof-of-concept of the 
  extended model. 
  
  \subsection{Simulation set-up}
    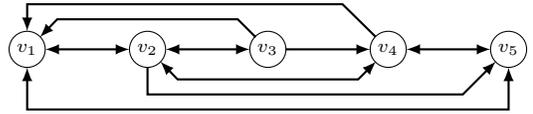
\begin{figure}[tb]
        \centering
        \begin{tikzpicture}[scale = 0.8]


          \draw(0,3) circle(0.3);
          \draw (0,3) node[text centered,text width=0.5cm]{\scriptsize$v_1$};

          \draw[latex-latex,thick](0.3,3) -- (1.7,3);
          \draw[latex-latex,thick](8,2.7) -- (8,2) -- (0,2) -- (0,2.7);



          \draw(2,3) circle(0.3);
          \draw (2,3) node[text centered,text width=0.5cm]{\scriptsize$v_2$};

          \draw[latex-latex,thick](2.3,3) -- (3.7,3);
          \draw[latex-latex,thick](5.7979,2.7979) -- (5.5,2.5) -- (2.5,2.5) -- (2.2121,2.7979);
          \draw[-latex,thick](2,2.7) -- (2,2.25) -- (7.24,2.25) -- 
                           (7.7879,2.7979);



          \draw(4,3) circle(0.3);
          \draw (4,3) node[text centered,text width=0.5cm]{\scriptsize$v_3$};

          \draw[-latex,thick](4.3,3) -- (5.7,3);
          \draw[-latex,thick](3.7979,3.2021) -- (3.5,3.5) -- (0.5,3.5) -- (0.2121,3.2021);  
          


          \draw(6,3) circle(0.3);
          \draw (6,3) node[text centered,text width=0.5cm]{\scriptsize$v_4$};

          \draw[latex-latex,thick](6.3,3) -- (7.7,3);
          \draw[-latex,thick] (5.7879,3.2121) -- (5.24,3.75) -- (0,3.75) -- (0,3.3); 



          \draw(8,3) circle(0.3);
          \draw (8,3) node[text centered,text width=0.5cm]{\scriptsize$v_5$};
        \end{tikzpicture}
        \caption{Schmeatic of network}
        \label{figNetwork}
    \end{figure}
    \begin{table}[tb]
        \caption{Parameters of the oscillators and agents}
        \label{tabParameters}
        \centering
        \begin{tabular}{l|ccccc}
          \hline
            agent  &    $v_1$&    $v_2$&    $v_3$&    $v_4$&    $v_5$\\
          \hline
            $\omega_i$ in rad/s
                   & $1.1$
                   & $0.8$
                   & $1$
                   & $1.3$
                   & $1.05$\\
            $\varphi_{0,i}$ in rad&  0.5& 2.5 & 1.5& 2 & 
            4.5\\
            neighbors $\mathcal{N}_i$ & \{2,5\} & \{1,3,4,5\} & \{1,2,4\} & \{1,2,5\} & \{1,4\} \\
          \hline
        \end{tabular}
    \end{table}

    Both simulations use the network and parameters as shown by Fig. 
    \ref{figNetwork} and Table \ref{tabParameters}. The network is an 
    arbitrary, directed network. The respective neighbors of the agents are also
    shown in Table \ref{tabParameters}. The weights of the edges and 
    $\frac{K}{N}$ are uniformly set to $1$. The frequency values are chosen to 
    ease the computational load and keep the simulation duration short -- the 
    pertinent effects are independent of the frequencies. The simulations run 
    in continuous-time with a maximum step-size of 0.01 sec. Since there are no 
    disturbances assumed in the phase functions \eqref{eqLocPhaseFct}, i.e., 
    $\Ddot{\varphi_i}(t) = 0, \; \forall t$, only their parameters are used for 
    the initialization, cf. \eqref{eqInitExtendedKuramoto}. That is, 
    the phase functions are not generated as explicit inputs of the 
    agents. Hence, they are also not shown in  Fig. 
    \ref{figNetwork}. Whereas $\varphi_{0,i}$ of agents 1 to 4 are chosen to 
    lie within the $[-\pi,\pi]$ bracket, $\varphi_{0,5}$ is set to be within 
    $[\pi,3\pi]$ in order to show that the phase functions $\theta_i(t)$ 
    converge to a common value up to an additional term of $2N\pi,\; N \in 
    \mathbb{N}$.       
    
  \subsection{Kuramoto model with error bound}
      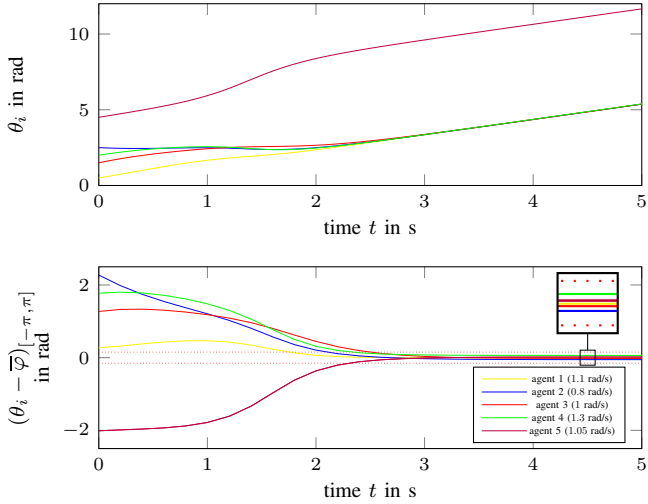
\begin{figure}
        \centering
        \pgfplotsset{every axis/.append style={
    label style={font=\footnotesize},
    tick label style={font=\scriptsize}
  },
  /pgf/number format/.cd,1000 sep={}
}

\begin{tikzpicture}[spy using outlines={rectangle, magnification=4,connect spies}]
  \begin{axis}[at={(0cm, 0cm)}, 
height=4cm, 
width=8.8cm,
xlabel=time $t$ in s,
x label style = {below=-2mm},
ylabel style={align=center,below=-1mm}, 
ylabel=$\theta_i$ in rad,
xmin = 0,
xmax = 5,
ymin = 0,
ymax = 12]%
\addplot[yellow, very thin] table[x index = 0, y index = 1] 
{Data/Phase_Kura1.txt};
\addplot[blue, very thin] table[x index = 0, y index = 2] 
{Data/Phase_Kura1.txt};
\addplot[red, very thin] table[x index =0, y index = 3] 
{Data/Phase_Kura1.txt};
\addplot[green, very thin] table[x index = 0, y index = 4] 
{Data/Phase_Kura1.txt};
\addplot[purple, very thin] table[x index = 0, y index = 5] 
{Data/Phase_Kura1.txt};
  \end{axis}

  \begin{axis}[at={(0cm, -3.5cm)}, 
height=4cm, 
width=8.8cm,
xlabel=time $t$ in s, 
x label style = {below=-2mm},
ylabel style={align=center,below=-1mm}, 
ylabel=$\left(\theta_i - \overline{\varphi}\right)_{[-\pi,\pi]}$\\in rad,
xmin = 0,
xmax = 5,
ymin = -2.5,
ymax = 2.5,
legend pos=south east,
legend style={nodes={scale=0.4, transform shape}}]%
\addplot[yellow, very thin] table[x index = 0, y index = 1] 
{Data/Err_Phase_Simple.txt};
\addplot[blue, very thin] table[x index = 0, y index = 2] 
{Data/Err_Phase_Simple.txt};
\addplot[red, very thin] table[x index =0, y index = 3] 
{Data/Err_Phase_Simple.txt};
\addplot[green, very thin] table[x index = 0, y index = 4] 
{Data/Err_Phase_Simple.txt};
\addplot[purple, very thin] table[x index = 0, y index = 5] 
{Data/Err_Phase_Simple.txt};
\addplot[purple, very thin] table[x index = 0, y index = 5] 
{Data/Err_Phase_Simple.txt};

\addplot[red, very thin, densely dotted] table[x index = 0, y index = 1] 
{Data/Bound_Simple.txt};
\addplot[red, very thin, densely dotted] table[x index = 0, y index = 2] 
{Data/Bound_Simple.txt};

\coordinate (spypoint) at (axis cs:4.5,0);
\coordinate (spyviewer) at (axis cs:4.5,1.5);
\spy[width=0.8cm,height=0.8cm] on (spypoint) in node [fill=white] at (spyviewer);

\addlegendentry{agent 1 ($1.1\text{ rad/s}$)}
\addlegendentry{agent 2 ($0.8\text{ rad/s}$)}
\addlegendentry{agent 3 ($1\text{ rad/s}$)}
\addlegendentry{agent 4 ($1.3\text{ rad/s}$)}
\addlegendentry{agent 5 ($1.05\text{ rad/s}$)}
\end{axis}
\end{tikzpicture}
        \vspace{-1.5\baselineskip}
        \caption{Simulation of Kuramoto model; phase $\theta_i$ (top) and phase error w.r.t. to consensus with error bound  (red, dotted lines) (bottom)}
        \label{figKuramotoSim}
    \end{figure}
    The results for the standard Kuramoto model \eqref{eqKuramoto} are shown 
    in Fig. \ref{figKuramotoSim} in terms of the phase functions $\theta_i(t)$ 
    and the error $\theta_i(t)-\overline{\varphi}(t)$ -- wrapped to 
    $[-\pi,\pi]$. The bound \eqref{eqBoundArbitrary}, calculated for $\gamma_L 
    = [0.6527,\, 0.2670,\, 0.0890,\, 0.3264,\, 0.6231]^\mathrm{T}$ and 
    $\lambda_2 = 2.382$, is also plotted. The magnification of the bottom 
    diagram of Fig. \ref{figKuramotoSim} shows that the errors of the 
    $\theta_i$ w.r.t. $\overline{\varphi}(t)$ converge to the interval defined 
    by the bounds but stay away from zero. Agent 5 shows the above-mentioned 
    effect of being pushed to the nearest $2N\pi$-equivalent of the actual 
    consensus function, which is given by $\overline{\varphi}(t) = 1.072 
    \text{ rad/s}\cdot t + 0.2281\text{ rad}$. The difference between 
    $\overline{\omega}$ calculated by 
    $\frac{\gamma_L^\mathrm{T}\omega}{\sum_{i=1}^N(\gamma_L)_i}$ and the actual 
    value due to $\psi(t)$ is $4.35\cdot 10^{-6}$. Since the largest    
    difference $\theta_i - \theta_j$ is $0.1172$ (agents 2 and 4) and 
    $\sin(\theta_i - \theta_j)=0.1169$, the small-angle 
    approximation applies. Hence, the application of the bound is also 
    theoretically reasonable. The bound is given by $0.1528$ whereas the 
    largest error w.r.t. $\overline{\varphi}(t)$ is $0.0627$. 
  
  \subsection{Extended Kuramoto model}
    \begin{figure}
        \centering
        \pgfplotsset{every axis/.append style={
    label style={font=\footnotesize},
    tick label style={font=\scriptsize}
  },
  /pgf/number format/.cd,1000 sep={}
}

\begin{tikzpicture}[spy using outlines={rectangle, magnification=4,connect spies}]
  \begin{axis}[at={(0cm, 0cm)}, 
height=4.0cm, 
width=4.6cm,
xlabel=time $t$ in s,
x label style = {below=-2mm},
ylabel style={align=center, below= -1mm}, 
ylabel=$\vartheta_i$ in rad/s,
xmin = 0,
xmax = 5,
ymin = 0.8,
ymax = 1.3,
legend pos=south east,
legend style={nodes={scale=0.425, transform shape}}]%
\addplot[yellow, very thin] table[x index = 0, y index = 1] 
{Data/Phase_Kura2.txt};
\addplot[blue, very thin] table[x index = 0, y index = 3] 
{Data/Phase_Kura2.txt};
\addplot[red, very thin] table[x index =0, y index = 5] 
{Data/Phase_Kura2.txt};
\addplot[green, very thin] table[x index = 0, y index = 7] 
{Data/Phase_Kura2.txt};
\addplot[purple, very thin] table[x index = 0, y index = 9] 
{Data/Phase_Kura2.txt};
  \end{axis}

  \begin{axis}[at={(4.2cm, 0cm)}, 
height=4.0cm, 
width=4.6cm,
xlabel=time $t$ in s,
x label style = {below=-2mm},
ylabel style={align=center, below= 3mm}, 
ylabel=$\theta_i$ in rad,
xmin = 0,
xmax = 5,
ymin = 0,
ymax = 12,
legend pos=south east,
legend style={nodes={scale=0.425, transform shape}}]%
\addplot[yellow, very thin] table[x index = 0, y index = 2] 
{Data/Phase_Kura2.txt};
\addplot[blue, very thin] table[x index = 0, y index = 4] 
{Data/Phase_Kura2.txt};
\addplot[red, very thin] table[x index =0, y index = 6] 
{Data/Phase_Kura2.txt};
\addplot[green, very thin] table[x index = 0, y index = 8] 
{Data/Phase_Kura2.txt};
\addplot[purple, very thin] table[x index = 0, y index = 10] 
{Data/Phase_Kura2.txt};
  \end{axis}

  \begin{axis}[at={(0cm, -3.5cm)}, 
height=4.0cm, 
width=8.8cm,
xlabel=time $t$ in s, 
x label style = {below=-2mm},
ylabel style={align=center, below= -1mm}, 
ylabel=$\left(\theta_i - \overline{\varphi}\right)_{[-\pi,\pi]}$\\in rad,
xmin = 0,
xmax = 5,
ymin = -2.5,
ymax = 2.5,
legend pos=south east,
legend style={nodes={scale=0.4, transform shape}}]%
\addplot[yellow,very thin] table[x index = 0, y index = 1] 
{Data/Err_Phase_Extended.txt};
\addplot[blue,very thin] table[x index = 0, y index = 2] 
{Data/Err_Phase_Extended.txt};
\addplot[red,very thin] table[x index =0, y index = 3] 
{Data/Err_Phase_Extended.txt};
\addplot[green,very thin] table[x index = 0, y index = 4] 
{Data/Err_Phase_Extended.txt};
\addplot[purple,very thin] table[x index = 0, y index = 5] 
{Data/Err_Phase_Extended.txt};

\coordinate (spypoint) at (axis cs:4.5,0);
\coordinate (spyviewer) at (axis cs:4.5,1.5);
\spy[width=0.8cm,height=0.8cm] on (spypoint) in node [fill=white] at (spyviewer);
\addlegendentry{agent 1 ($1.1\text{ rad/s}$)}
\addlegendentry{agent 2 ($0.8\text{ rad/s}$)}
\addlegendentry{agent 3 ($1\text{ rad/s}$)}
\addlegendentry{agent 4 ($1.3\text{ rad/s}$)}
\addlegendentry{agent 5 ($1.05\text{ rad/s}$)}
\end{axis}
\end{tikzpicture}
        \vspace{-1.5\baselineskip}
        \caption{Simulation of extended Kuramoto model; frequency $\vartheta_i$ (top left), phase $\theta_i$ (top right), and phase error w.r.t. to consensus (bottom)}
        \label{figExtendedKuramotoSim}
    \end{figure}
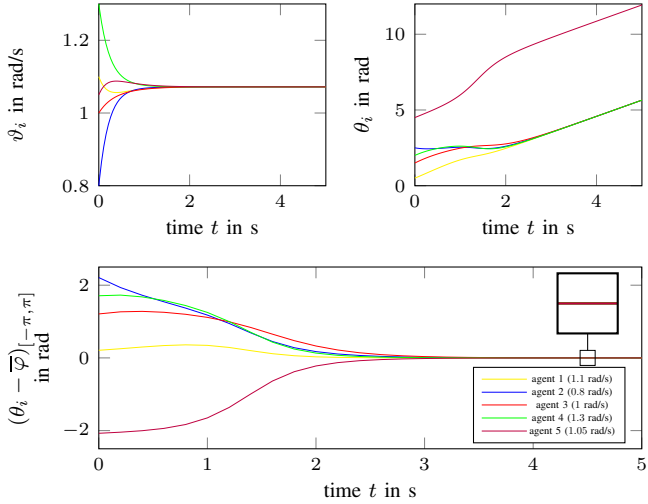

    The results for the extended Kuramoto model \eqref{eqExtendedKuramoto} are 
    shown in Fig. \ref{figExtendedKuramotoSim}. The results of the additional 
    frequency consensus stage, shown in the top left diagram, depict the 
    explicit agreement of the agents on $\overline{\omega} = 1.072 \text{ 
    rad/s}$ -- for the standard Kuramoto model this frequency follows 
    implicitly from the phase consensus. Similar to the standard Kuramoto, the 
    phases $\theta_i$ (top right diagram) show the same split between 
    $\overline{\varphi}(t)$ (for agent 1 to 4) and $\overline{\varphi}(t)+2\pi$ 
    (for agent 5). But, as shown by the bottom diagram, there now no remaining 
    error. Due to the explicit separation into frequency and phase consensus by 
    \eqref{eqSynchConsensusFreq} and \eqref{eqSynchConsensusAbs}, 
    $\overline{\varphi}_0$ also differs from the standard Kuramoto model and 
    the consensus phase is now given by $\overline{\varphi}(t) = 1.072 \text{ 
    rad/s}\cdot t + 0.2905\text{ rad}$.
      
    \begin{figure} 
        \centering
        \pgfplotsset{every axis/.append style={
    label style={font=\footnotesize},
    tick label style={font=\scriptsize}
  },
  /pgf/number format/.cd,1000 sep={}
}

\begin{tikzpicture}
  \begin{axis}[at={(0cm, 0cm)}, 
height=3.75cm, 
width=8.8cm,
xlabel=time $t$ in s,
x label style = {below=-2mm},
ylabel style={align=center, below=2mm}, 
ylabel=$\Dot{\theta}$ in rad/s,
xmin = 0,
xmax = 5,
ymin = -0.6,
ymax = 3.1,
legend pos=north east,
legend style={nodes={scale=0.4, transform shape}}]%
\addplot[yellow, very thin] table[x index = 0, y index = 1] 
{Data/Inst_phase_Kura2.txt};
\addplot[blue, very thin] table[x index = 0, y index = 2] 
{Data/Inst_phase_Kura2.txt};
\addplot[red, very thin] table[x index =0, y index = 3] 
{Data/Inst_phase_Kura2.txt};
\addplot[green, very thin] table[x index = 0, y index = 4] 
{Data/Inst_phase_Kura2.txt};
\addplot[purple, very thin] table[x index = 0, y index = 5] 
{Data/Inst_phase_Kura2.txt};
\addplot[yellow, very thin, dashed] table[x index = 0, y index = 1] 
{Data/Inst_phase_Kura1.txt};
\addplot[blue, very thin, dashed] table[x index = 0, y index = 2] 
{Data/Inst_phase_Kura1.txt};
\addplot[red, very thin, dashed] table[x index =0, y index = 3] 
{Data/Inst_phase_Kura1.txt};
\addplot[green, very thin, dashed] table[x index = 0, y index = 4] 
{Data/Inst_phase_Kura1.txt};
\addplot[purple, very thin, dashed] table[x index = 0, y index = 5] 
{Data/Inst_phase_Kura1.txt};
\addlegendentry{agent 1 ($1.1\text{ rad/s}$)}
\addlegendentry{agent 2 ($0.8\text{ rad/s}$)}
\addlegendentry{agent 3 ($1\text{ rad/s}$)}
\addlegendentry{agent 4 ($1.3\text{ rad/s}$)}
\addlegendentry{agent 5 ($1.05\text{ rad/s}$)}
  \end{axis}
\end{tikzpicture}
        \vspace{-1.5\baselineskip}
        \caption{Comparison of momentary frequency $\Dot{\theta}_i(t)$ between 
                 Kuramoto model (dashed) and extended Kuramoto model (solid)}
        \label{figComparisonMomFreq}
    \end{figure}
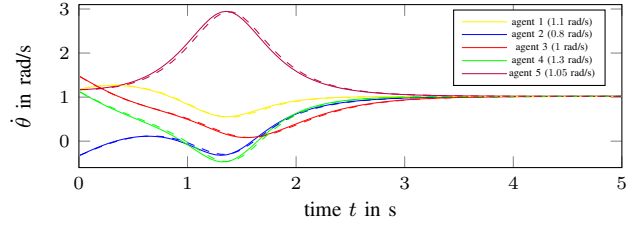

    This slightly different behaviour shows itself also in Fig. 
    \ref{figComparisonMomFreq}, which presents the instantaneous frequencies 
     $\Dot{\theta}_i$. Since these frequencies steer the phases 
     $\theta_i$ to the steady state, even the small difference 
     between zero-error and a maximum error of $0.0627$ creates dissenting 
     graphs of $\Dot{\theta}_i$ for the standard Kuramoto model and the 
     extended Kuramoto model. 
    
\section{APPLICATION TO ICAS NETWORKS}
  Synchronization in ICAS networks requires transmission of known 
  signals to identify CFO and TO for frequency and phase/time synchronization. 
  For this purpose, communication standards define different kinds of pilots 
  symbols or preambles.  While a discretized version of the extended Kuramoto 
  model could theoretically be applied to achieve frequency and phase 
  synchronization, the necessary sampling of a constant 
  pilot of frequency $\omega_p$ would require update rates of much larger than
  $\omega_i$. This in turn demands significant sampling and 
  clock rates from analog-to-digital converters and digital circuits. Moreover, 
  it is common that changes to pilot tones occur due to environmental 
  influences. 
 
  Thus, both models will be used, since the difference of the agreement 
  frequencies of both models is insignificant for practical applications. Let 
  each agent $v_i$  
  transmit a monofrequent 
  pilot tone at $\omega_i$ with repetition frequency $\Omega_{S,i}$ where 
  changes 
  to the tone only occur on tone-to-tone basis, i.e., the pilot tone does not 
  change its frequency during transmission \cite{dallmann2023mutual}. Also, let 
  delays be small enough to be neglectable. Using different aspects of the 
  pilot tones, CFO and TO synchronization is achieved. For CFO synchronization, 
  frequency agreement among the $\omega_i$ is given by $\Dot{\theta}_{i}(u)$ 
  of the discrete-time standard Kuramoto model
  \begin{equation}\label{eqCFOKuramoto}
      \Dot{\theta}_{i}(k) = \omega_i + \frac{K^\theta}{N}
     \sum_{j=1,j \neq i}^N\sin\theta_{\Delta,ij}(k) \; ,
  \end{equation}
  with an interval of $T_{S,i} = \frac{2\pi}{\Omega_{S,i}}$ for the time steps 
  $k\in \mathbb{N}$. Since the phases are not directly measurable,  
  $\theta_{\Delta,ij}(k)$ is defined by the total phase change due to each tone 
  during  $T_{S,i}$ as 
  \cite{dallmann2023mutual}
  \begin{equation} \label{eqnCFOphase}
     \theta_{\Delta,ij}(uk) = \theta_{\Delta,ij}(k-1) + \lambda 
                   T_{P,i}\hat{\omega}_{\Delta,ij}(k)\; , \quad k\geq 1\; ,
  \end{equation}
  where $\hat{\omega}_{\Delta,ij}(k)$ is the measured CFO between oscillators 
  $i$ and $j$ adjusted by $\theta_{i}$ and $\theta_{i}$, $T_{P,i} < T_{S,i}$ is 
  the pilot tone duration, and 
  $\lambda\leq 1$ is related to the sampling properties, see 
  \cite{dallmann2023mutual}. Thus, $T_{P,i}\hat{\omega}_{\Delta,ij}(k)$ is the 
  change of the phase difference over $T_{S,i}$. The update of 
  ${\theta}_{i}(k)$ is $\theta_{i}(k+1) = \theta_{i}(k) + 
  T_{S,i}\Dot{\theta}_{i}(k)$, cf. \cite{OlfatiSaber.2007}. For TO 
  synchronization, the rising edge of all tones must agree, i.e., the 
  differences between the instantaneous phases of the repetition frequencies 
  $\Omega_{S,i}$ are zero. This is given by $\Theta_{i}(k)$ of the 
  discrete-time extended Kuramoto 
  model (update interval again $T_{S,i} = \frac{1}{\Omega_i}$)
  \begin{subequations}
    \label{eqTOKuramoto}
    \begin{align}
      \Dot{\Omega}_{i}(k) &= -\sum_{j= 1,j \neq i}^k a_{ij}^\Omega 
       \Omega_{\Delta,ij}(k) \; , \\
      \Dot{\Theta}_{i}(k) &= \Omega_{i}(k) + \frac{K^\Theta}{N}
        \sum_{j= 1,j \neq  i}^k a_{ij}^\Theta \sin\Theta_{\Delta,ij}(k) \; .
    \end{align}
  \end{subequations}
  Since, again, a direct measurement of the frequencies and phases is 
  technically hardly feasible, the differences
  $\Omega_{\Delta,ij}(k)$ and 
  $\Theta_{\Delta,ij}(k)$ are estimated by \cite{dallmann2021sampling}
  \begin{subequations}
  \begin{align}
    \Omega_{\Delta,ij}(k) &\approx 
    \frac{\Omega_{i}(k)}{2\pi}
    \left(\Theta_{\Delta,ij}(k)-\Theta_{\Delta,ij}(k-1)\right)\; , \\
   \Theta_{\Delta,ij}(u) &= -\Lambda\left(\Omega_{i}(k)\hat{T}_{\Delta,ij}(k) + 
    2\pi P_{\Delta}(k)\right),
    \end{align}
    \end{subequations}
  where $\hat{T}_{\Delta,ij}(k)$ is the measured TO between the rising edges 
  of tones $i$ and $j$. Again, $\Lambda\leq 1$ relates to sampling properties, 
  see \cite{dallmann2021sampling}. Since tones are transmitted at 
  different repetition frequencies, the difference in tones transmitted by $i$ 
  and $j$ must be kept track of with $P_{\Delta}(u)\in\mathbb{N}$. 
  Both $\Omega_{i}(u)$ and $\Theta_{i}(k)$ are updated as $\Omega_{i}(k+1) = 
  \Omega_{i}(k) + T_{S,i}\Dot{\Omega}_{i}(k)$ and $\Theta_{i}(k+1) = 
  \Theta_{i}(k) + T_{S,i}\Dot{\Theta}_{i}(k)$, cf. \cite{OlfatiSaber.2007}. 

\section{CONCLUSION}
  This paper discussed the Kuramoto model in the context of dynamic 
  consensus. Based on the similarities between the dynamic consensus 
  and the Kuramoto model, bounds for the phase errors for 
  all-to-all and arbitrary networks are given. Also, using the idea 
  of a stage-wise consensus of the NODAC algorithm, an extended Kuramoto model 
  is derived, which yield a zero phase error also for a finite number of 
  agents. Future work will concentrate on the improvement of the bound 
  based on \cite{Jadbabaie.2004} to also capture the transient behavior, 
  delays, as well as an adjustment to allow for disturbances in the phase 
  functions, e.g., oscillator drift, by means of higher orders of the NODAC 
  algorithm. In addition, the focus will be on a further development of the 
  practical version as well as a hybrid approach dividing two stages into an 
  analog and a digital part.
  
\section*{APPENDIX} 
  This appendix gives a detailed description of the derivation of 
  \eqref{eqBoundArbitrary}. In \cite{Kia.2019}, the error bound 
  \eqref{eqErrorBound} is calculated for balanced networks, i.e., average 
  consensus, and hence for a specific value of the left eigenvector 
$\gamma_L$, 
  namely $\gamma_L = \frac{1}{\sqrt{N}}1_N$. For an arbitrary network this 
  derivation differs. 

  The starting point is again \eqref{eqErrorCons} or its network version
  \begin{equation*}
     e(t) = x(t) - 1_N\overline{u}(t)\; ,
  \end{equation*}
   where $\overline{u}(t)$ is also an arbitrary consensus function, i.e., not 
  necessarily the average of all $u_i(t)$. Following the logic of 
  \cite{Kia.2019}, a transformation according to the agreement and disagreement
  direction is defined as 
  \begin{equation*}
    T = \begin{bmatrix}
          \gamma_L & R
        \end{bmatrix}
  \end{equation*}
  with $TT^\mathrm{T}=T^\mathrm{T}T=I$ and $\|\gamma_L\|=1$. The error is 
  now given by
  \begin{equation*}
     \Tilde{e}(t) = T^\mathrm{T}e(t) = T^\mathrm{T}(x(t) - 1_N\overline{u}(t))
  \end{equation*} 
  and its derivative by
  \begin{align*}
     \Dot{\Tilde{e}}(t) = T^\mathrm{T}\Dot{e}(t) 
        &= T^\mathrm{T}(\Dot{x}(t) - 1_N\Dot{\overline{u}}(t))\\
        &= T^\mathrm{T}(-Lx(t) + \Dot{u}(t) - 1_N\Dot{\overline{u}}(t)) \; .
  \end{align*} 
  This equation can be rewritten as
  \begin{align*}
     \Dot{\Tilde{e}}(t) 
        =& T^\mathrm{T}\Dot{e}(t)\\
        =& -T^\mathrm{T}LTT^\mathrm{T}x(t) + T^\mathrm{T}\Dot{u}(t) - 
            T^\mathrm{T}1_N\Dot{\overline{u}}(t)\\
        =& -T^\mathrm{T}LTT^\mathrm{T}(x(t)-1_N\overline{u}(t)) + 
            (-T^\mathrm{T}LTT^\mathrm{T}1_N\overline{u}(t)))\\ 
          &+ T^\mathrm{T}\Dot{u}(t) - T^\mathrm{T}1_N\Dot{\overline{u}}(t)\\
        =&-T^\mathrm{T}LT\Tilde{e}(t) + T^\mathrm{T}\Dot{u}(t) - 
           T^\mathrm{T}1_N\Dot{\overline{u}}(t) \; .
  \end{align*}
  The last step follows since by assumption of a connected network 
  $LTT^\mathrm{T}1_N = L1_N = 0$. Now, splitting the error into the agreement 
  and disagreement directions, one gets
  \begin{align*}
    \begin{bmatrix}
      \Dot{\Tilde{e}}_{agr}(t)\\
      \Dot{\Tilde{e}}_{dis}(t) 
    \end{bmatrix} &=
    -\begin{bmatrix}
      \gamma_L^\mathrm{T}L\gamma_L & \gamma_L^\mathrm{T}LR\\
      R^\mathrm{T}L\gamma_L & R^\mathrm{T}LR
    \end{bmatrix}
    \begin{bmatrix}
      \Tilde{e}_{agr}(t)\\
      \Tilde{e}_{dis}(t) 
    \end{bmatrix}
    +
    \begin{bmatrix}
      \gamma_L^\mathrm{T}\\
      R^\mathrm{T}
    \end{bmatrix}(\Dot{u}(t) - 1_N\Dot{\overline{u}}(t))\\
    &=
    -\begin{bmatrix}
      0 & 0\\
      R^\mathrm{T}L\gamma_L & R^\mathrm{T}LR
    \end{bmatrix}
    \begin{bmatrix}
      \Tilde{e}_{agr}(t)\\
      \Tilde{e}_{dis}(t) 
    \end{bmatrix}
    +
    \begin{bmatrix}
      \gamma_L^\mathrm{T}\\
      R^\mathrm{T}
    \end{bmatrix}(\Dot{u}(t) - 1_N\Dot{\overline{u}}(t))\; . 
  \end{align*}
  Thus, assuming that the initialization was given by $x_i(t_0) = u_i(t_0)$ 
  which yields $\Tilde{e}_{agr}(t) = 0$, the derivative of the error of the 
  disagreement direction is given by
  \begin{equation*}
     \Dot{\Tilde{e}}_{dis}(t) = -R^\mathrm{T}LR\Tilde{e}_{dis} 
        + R^\mathrm{T}(\Dot{u}(t) - 1_N\Dot{\overline{u}}(t)) \; .
  \end{equation*}
  For a balanced network this is equivalent to (13b) of \cite{Kia.2019} since 
  in this case $\gamma_L = \frac{1}{\sqrt{N}}1_N$ and hence $R^\mathrm{T}1_N = 
  0$. Also, for the agreement direction error, it shows again that 
  $\Dot{\Tilde{e}}_{agr}(t) = 0$ follows as $\gamma_L^\mathrm{T}\Dot{u}(t) - 
  \gamma_L^\mathrm{T}1_N\Dot{\overline{u}}(t) = 0$ by definition of 
  $\Dot{\overline{u}}(t)$. The remainder of the derivation for 
  $\eqref{eqBoundArbitrary}$ now follows based on the fact that the 
  transformation $T^\mathrm{T}$ can also be expressed as a projection onto 
  $\gamma_L$ and its orthogonal complement and the same argument as made in 
  \cite{Kia.2019}.
  
\bibliographystyle{IEEEtran}
\bibliography{IEEEabrv,main}

\end{document}